\newtheorem{theo}{Theorem}
\newtheorem{defi}{Definition}
\newtheorem{coro}{Corollary}
\newtheorem{prop}{Proposition}
\newtheorem{lemm}{Lemma}
\newenvironment{rema}{\noindent {\bf Remark:}\ }{\ \rule{1mm}{2mm}}
\newcommand{\pr}{\indent{\em Proof: \ }}
\newcommand{\qed}{\hspace*{5 mm}$\square$\bigskip}
\newenvironment{proof}{\noindent {\pr}\ }{\qed}
\newcommand{\codi}{{\cal C}}
\newcommand{\pes}{\mbox{wt}}
\newcommand{\pesLee}{\mbox{w$_L$}}
\newcommand{\Z}{{\mathbb{Z}}}
\newcommand{\N}{{\mathbb{N}}}
\newcommand{\F}{\mathbb{F}}
\newcommand{\Fn}{\F^{\:n}}
\newcommand{\zero}{{\mathbf 0}}
\renewcommand{\u}{{\mathbf 1}}
\newcommand{\dd}{\displaystyle}
\title{Propelinear structure of $\Z_{2k}$-linear codes}
\author{J. Borges, C. Fern{\'a}ndez-C{\'o}rdoba and J. Rif{\`a} \\ 
{\small Department of Information and Communications Engineering,} \\
{\small Universitat Aut{\`o}noma de Barcelona, 08193-Bellaterra, Spain}\\
{\small (e-mail: \{jborges, cfernandez, jrifa\}@deic.uab.cat)}} 
\date{\today}
\begin{document}
\maketitle

\begin{abstract}
Let $\codi$ be an additive subgroup of $\Z_{2k}^n$ for any $k\geq 1$. We define a Gray map $\Phi:\Z_{2k}^n \longrightarrow \Z_2^{kn}$ 
such that $\Phi(\codi)$ is a binary propelinear code and, hence, a Hamming-compatible group code. Moreover, $\Phi$ is the 
unique Gray map such that $\Phi(C)$ is Hamming-compatible group code. Using this Gray map we discuss about the nonexistence of 1-perfect
binary mixed group code.
\end{abstract}

%%%%
\section{Introduction}
%%%%

Since the famous paper \cite{Sole} on $\Z_4$-linear codes, a large number of articles about $\Z_4$-linear and $\Z_k$-modulo codes have 
appeared. In more recent papers the Gray map, binary interpretation and concepts introduced in \cite{Sole} have been generalized, as in 
\cite{Carlet} for example.

In \cite{Jaume}, it is shown that linear and $\Z_4$-linear codes are subclasses of the more general class of translation invariant 
propelinear codes. In this paper we prove that any $\Z_{2k}$-modulo code is a binary propelinear code, but not translation invariant for 
$k>2$.

The paper is organized as follows. In Section 2 we give the preliminary concepts on distance compatibility, propelinear codes and 
translation invariant propelinear codes. In Section 3 we show the correspondence between $\Z_{2k}$-modulo codes and binary propelinear 
codes. In section 4 we define mixed group codes and we generalize the above correspondence for these codes to study which of them can 
be perfect. Finally, in Section 5 we point out some remarks and conclusions.

%%%%
\section{Propelinear codes}
%%%%

Let $\Fn$ be the $n$-dimensional binary vector space. We denote by $\zero$ the all-zero vector. As usual, the {\em (Hamming) distance} 
between two vectors $x,y\in\Fn$ is the number of coordinates in which they differ and denoted by $d(x,y)$. The {\em weight} of a vector 
$x\in\Fn$ is the number of its nonzero entries $\pes(x)=d(\zero,x)$.

The concept of {\em (Hamming) distance-compatible operation} in $\Fn$ is defined in \cite{Rifa} and \cite{Adds}. If 
$\;*:\Fn\times\Fn \longrightarrow \Fn$ is such an operation, then for all $v\in\Fn$ it should verify:
\begin{itemize}
\item[(i)] $d(v,v*e)=1\;\;\;\forall e\in\Fn \mbox{ with wt}(e)=1$;
\item[(ii)] $v*\zero = \zero*v=v$;
\item[(iii)] $v*e=w*e$ if and only if $v=w$, for all
$e\in\Fn \mbox{ with wt}(e)=1$.
\end{itemize}

If $(\Fn,*)$ is a group, then the operation $*$ is distance-compatible if and only if $d(v,v*u)=\pes(u)$ for all vectors 
$u,v\in\Fn$. The `if' part is trivial and the `only if' part is shown in \cite[Proposition 14]{Adds}. A binary code $\codi$ of length 
$n$ is a subset of $\F^n$. If this subset is a linear subspace of $\F^n$, then $\codi$ will be a linear code. In any case we will call 
the vectors in $\codi$ codewords. We denote by $(\codi,\star)$ a code in $\Fn$ with a group structure defined by $\star$. 
This operation could be nondefined in the whole space $\Fn$, but it could induce an action $\;\star:\codi\times\Fn \longrightarrow\Fn$. 

\begin{defi}
Let $(\codi,\star)$ a code in $\Fn$ and assume the operation $\star$ induces an action $\;\star:\codi\times\Fn \longrightarrow\Fn$. 
The action $\star$ is {\em Hamming-compatible}  if $d(x,x\star v)= ${\em $\pes$}$ (v)$, for all $x\in\codi$ and for all $v\in\Fn$.
\end{defi}

\begin{defi}
A binary code $(\codi,\star)$ of length $n$ is a {\em Hamming-compatible group code} if $(\codi,\star)$ is a group and it is possible to 
extend $\star:\codi\times\Fn\longrightarrow\Fn$ to a Hamming-compatible action.
\end{defi}

Of course, given a code $\codi\subset\Fn$ among all the different group structures, we are interested in those being Hamming-compatible 
(assuming we are working with the Hamming metric). A very general class of such codes are the propelinear ones, defined in \cite{JosHu}:

\begin{defi}
Let ${\cal S}_n$ be the symmetric group of permutations on $n$ elements. A (binary) code ${\cal C}$ of length $n$ is said to be 
{\em propelinear} if for any codeword $x\in {\cal C}$ there is a coordinate permutation $\pi_x\in {\cal S}_n$ verifying the properties:
\begin{enumerate}
\item $x+\pi_x(y)\in {\cal C}\;\;$ if $y\in {\cal C}$.
\item $\pi_x\circ \pi_y=\pi_z\;\;\;\forall y\in {\cal C}$, where $z=x+\pi_x(y)$.
\end{enumerate}
\end{defi}

Now, we can define the binary operation $\;\star:{\cal C}\times \Fn\longrightarrow \Fn$ such that
$$
x\star y=x+\pi_x(y)\;\;\forall x\in {\cal C}\;\;\forall y\in \Fn.
$$
This operation is clearly associative and closed in ${\cal C}$. Since, for any codeword $x\in {\cal C}$, $x\star y=x\star z$ implies $y=z$, 
we have that $x\star y\in {\cal C}$ if and only if $y\in {\cal C}$. Thus, there must be a codeword $e$ such that $x\star e=x$. It follows 
that $e=\zero$ is a codeword and, from 2, we deduce that $\pi_{\zero}$ is the identity permutation. Hence, $({\cal C},\star)$ is a group, 
which is not Abelian in general; $\zero$ is the identity element in ${\cal C}$ and $x^{-1}=\pi_x^{-1}(x)$, for all $x\in {\cal C}$. Note 
that $\Pi=\{\pi_x\mid x\in {\cal C}\}$ is a subgroup of ${\cal S}_n$ with the usual composition of permutations.

\begin{lemm}\label{Lem1}
Let $(\codi,\star)$ be a propelinear code, then
$$
d(x\star u,x\star v)=d(u,v)\;\;\;\forall x\in\codi\;\;\forall u,v\in\Fn
$$
\end{lemm}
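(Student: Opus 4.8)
The plan is to unwind the definition of the operation $\star$ and then use that both translation by a fixed vector and any coordinate permutation are Hamming isometries of $\Fn$. First, recall that for a propelinear code the operation is given by $x\star y=x+\pi_x(y)$ and is defined for every $y\in\Fn$, not only for codewords; hence $x\star u=x+\pi_x(u)$ and $x\star v=x+\pi_x(v)$, so that
$$d(x\star u,x\star v)=d\bigl(x+\pi_x(u),\,x+\pi_x(v)\bigr).$$

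Next, I would observe that for any fixed $w\in\Fn$ the map $z\mapsto w+z$ is an isometry of $(\Fn,d)$: indeed $d(w+a,w+b)=\pes\bigl((w+a)+(w+b)\bigr)=\pes(a+b)=d(a,b)$, since we work over the binary field. Applying this with $w=x$, $a=\pi_x(u)$, $b=\pi_x(v)$ gives
$$d\bigl(x+\pi_x(u),\,x+\pi_x(v)\bigr)=d\bigl(\pi_x(u),\,\pi_x(v)\bigr).$$
Finally, since $\pi_x\in{\cal S}_n$ is a permutation of the $n$ coordinates, it sends the set of positions in which two vectors differ bijectively onto the set of positions in which their images differ, so it preserves the Hamming distance: $d(\pi_x(a),\pi_x(b))=d(a,b)$ for all $a,b\in\Fn$. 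Taking $a=u$, $b=v$ yields $d(\pi_x(u),\pi_x(v))=d(u,v)$, and chaining the three equalities gives $d(x\star u,x\star v)=d(u,v)$.

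I do not expect any genuine obstacle: the statement is essentially the composition of two elementary isometry facts. The only points worth stating explicitly are that $\pi_x$ is a genuine element of ${\cal S}_n$ (so it acts on all of $\Fn$ and is distance-preserving) and that the extended operation $x\star y=x+\pi_x(y)$ is meaningful for arbitrary $y\in\Fn$; both are immediate from the definition of propelinear code given above. Equivalently, one can package the argument into the single remark that, for fixed $x$, the map $y\mapsto x\star y$ is a coordinate permutation followed by a translation, hence a Hamming isometry of $\Fn$, and therefore carries $d(u,v)$ to $d(x\star u,x\star v)$.
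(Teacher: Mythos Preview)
Your argument is correct: unfolding $x\star y=x+\pi_x(y)$ and using that translations and coordinate permutations are Hamming isometries gives exactly the claim, and you have checked all the needed details. The paper itself does not give a proof here but simply declares the statement trivial and refers to \cite{JosHu} and \cite{Adds}; your write-up is precisely the elementary verification those references would contain, so there is no real difference in approach.
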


\begin{proof}
The claim is trivial and can be found in \cite{JosHu} or \cite{Adds}.
\end{proof}

\begin{lemm}
A binary propelinear code is a Hamming-compatible group code.
\end{lemm}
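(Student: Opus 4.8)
The plan is to check the two requirements in the definition of a Hamming-compatible group code, relying entirely on what has already been established for a propelinear code. From the discussion preceding Lemma~\ref{Lem1} we already know that $(\codi,\star)$, with $x\star y=x+\pi_x(y)$, is a group with identity element $\zero$; moreover the defining formula $x\star y=x+\pi_x(y)$ makes sense for an arbitrary $y\in\Fn$, not only for $y\in\codi$, so $\star$ already induces an action $\star:\codi\times\Fn\longrightarrow\Fn$. Hence the only thing left to prove is that this action is Hamming-compatible, i.e.\ that $d(x,x\star v)=\pes(v)$ for every $x\in\codi$ and every $v\in\Fn$.

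First I would observe that each $\pi_x$ is a permutation of the $n$ coordinates, so it fixes the all-zero vector; therefore $x\star\zero=x+\pi_x(\zero)=x+\zero=x$. Then I would apply Lemma~\ref{Lem1} with $u=\zero$ to get
$$
d(x,x\star v)=d(x\star\zero,x\star v)=d(\zero,v)=\pes(v),
$$
which is exactly the Hamming-compatibility condition. Together with the already-known fact that $(\codi,\star)$ is a group, this shows that a binary propelinear code is a Hamming-compatible group code.

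I do not expect any genuine obstacle here: all the substance is contained in the group structure of $(\codi,\star)$ established above and in Lemma~\ref{Lem1}. The only points deserving an explicit word are that the action $\star$ is really defined on the whole of $\Fn$ (immediate from the formula $x\star y=x+\pi_x(y)$) and that $\pi_x(\zero)=\zero$ because $\pi_x$ merely permutes coordinates and so cannot change the weight of a vector.
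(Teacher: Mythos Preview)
Your proof is correct and is essentially the same as the paper's: both reduce to writing $d(x,x\star v)=d(x\star\zero,x\star v)$ and then invoking Lemma~\ref{Lem1}. You add a few more words of justification (why the action extends to all of $\Fn$ and why $x\star\zero=x$), but the argument is identical in substance.
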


\begin{proof}
Let $(\codi,\star)$ be such a code. We only have to prove that the action $\star:\codi\times\Fn\longrightarrow\Fn$ is Hamming-compatible. 
But this is clearly true because for any $x\in\codi$ and any $v\in\Fn$ we have
$$
d(x,x\star v)=d(x\star\zero,x\star v)=d(\zero,v)=\pes(v)
$$
applying Lemma \ref{Lem1}.
\end{proof}

A propelinear code $({\cal C},\star)$ is said to be a {\em translation invariant code} \cite{Jaume} if
$$
d(x,y)=d(x\star u,y\star u)\;\;\;\forall x,y\in {\cal C}
\;\;\;\forall u\in\Fn.
$$

As can be seen in \cite{Jaume} the class of translation invariant propelinear codes includes linear and $\Z_4$-linear codes. 
In fact, any translation invariant propelinear code of length $n$ can be viewed as a group isomorphic to a subgroup of 
${\mathbb Z}_2^{k_1} \oplus{\mathbb Z}_4^{k_2} \oplus {\cal Q}_8^{k_3}$; where $k_1+2k_2+4k_3=n$ and ${\cal Q}_8$ is the quaternion 
group on eight elements.
Clearly, the class of propelinear codes is more general than the class of linear codes, 
being in this case $\pi_x=Id$ for any codeword. We can find other exemples of propelinear structure,
 for instance, in \cite{Sole} there are exemples of $\Z_4$-linear codes (Goethals, Preparata like,...) and in 
\cite{Jaume} we can find the propelinear structure of the standard Preparata code which is not a $\Z_4$-linear code.

%%%%
\section{$\Z_{2k}$-codes as propelinear codes}
%%%%

There are different ways of giving a generalization of a Gray may. For instance, Carlet gives in \cite{Carlet} a generalization to 
$\Z_{2^k}$. In this paper we will give one preserving the basic property that the distance beetwen the images of two consecutive elements is exactly 
one (see \cite{Sole}).

\begin{defi}
The {\em Lee weight} of an element $x\in \Z_k$, {\em $\pesLee(x)$}, is defined as the minimum absolute value of any representative 
of its class in $\Z_k$. The {\em Lee distance} between $x,y\in \Z_k$ is $d_L(x,y)=${\em $\pesLee$}$(x-y)$. Clearly,
{\em $\pesLee$}$(x)=d_L(x,0)$.
\end{defi}

\begin{defi}
A Gray map is an aplication $\varphi:\Z_r\longrightarrow\Z_2^m$ such that
\begin{enumerate}
\item[(i)] $\varphi$ is one-to-one,
\item[(ii)] $d(\varphi(i),\varphi(i+1))=1$, $\forall i \in \Z_r$.
\end{enumerate}
\end{defi}

\begin{lemm}
Let  $\varphi:\Z_r\longrightarrow\Z_2^m$ a Gray map, then $r$ is even. 
\end{lemm}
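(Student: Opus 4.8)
The plan is to exploit the cyclic nature of the index set $\Z_r$: the sequence $\varphi(0),\varphi(1),\dots,\varphi(r-1),\varphi(0)$ is a closed walk in the Hamming graph on $\Z_2^m$ in which every step changes exactly one coordinate. Counting coordinate changes modulo $2$ around this closed walk forces $r$ to be even. Note that property (i) of a Gray map (injectivity) will not actually be needed; only property (ii) is used.

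Concretely, I would introduce the parity map $p:\Z_2^m\longrightarrow\Z_2$ defined by $p(a_1,\dots,a_m)=a_1+\cdots+a_m \pmod 2$. Since $\varphi(i)$ and $\varphi(i+1)$ differ in exactly one coordinate for every $i\in\Z_r$, we have $p(\varphi(i+1))=p(\varphi(i))+1$ in $\Z_2$. Iterating this relation $r$ times and using that $\varphi$ is defined on $\Z_r$ (so $\varphi(r)=\varphi(0)$), we obtain $p(\varphi(0))=p(\varphi(r))=p(\varphi(0))+r$ in $\Z_2$, hence $r\equiv 0\pmod 2$.

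There is essentially no obstacle here; the only point requiring a word of care is the passage ``$\varphi(r)=\varphi(0)$'', which is simply the statement that the domain is $\Z_r$ and that condition (ii) is required to hold for \emph{all} $i\in\Z_r$, in particular for $i=r-1$, linking $\varphi(r-1)$ to $\varphi(0)$. Everything else is a one-line induction on the parity function, so the proof will be short. One could alternatively phrase the same argument geometrically, observing that the Hamming graph on $\Z_2^m$ is bipartite (with the two sides given by even- and odd-weight vectors), and that any closed walk in a bipartite graph has even length; the parity-map formulation above is just the explicit form of this observation and is the version I would write down.
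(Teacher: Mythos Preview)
Your proof is correct and follows essentially the same approach as the paper. The paper defines $\psi(i)=\pes(\varphi(i))\bmod 2$, which is exactly your $p(\varphi(i))$, observes that $\psi(i)=\psi(0)+i\bmod 2$, and derives a contradiction from the assumption that $r$ is odd; your direct computation and the bipartite-graph remark are minor stylistic variants of the same parity argument.
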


\begin{proof}
Let $\psi:\Z_r\longrightarrow\Z_2$ defined as $\psi(i)=\pes(\varphi(i)) \bmod{2}$. Clearly, we can write $\psi(i)=\psi(0)+i \bmod{2}$. 
By definition of Gray map we have $d(\varphi(r-1),\varphi(0))=1$ but, if $r$ is odd, $\psi(r-1)=\psi(0)+r-1=\psi(0) \bmod{2}$ which is 
a contradiccion.
\end{proof}

\medskip

\begin{defi}
Let $\varphi:\Z_{2k}\longrightarrow\Z_2^m$ be a Gray map. $\varphi$ is {\rm distance-preserving} if $d(\varphi(i),\varphi(j))=d_L(i,j)$ 
and it is {\rm weight-preserving} if {\em $\pes$}$(\varphi(i))=${\em $\pesLee$}$(i)$.
\end{defi}

\begin{defi}
Let $\varphi:\Z_{2k}\longrightarrow\Z_2^m$ a Gray map and let + the usual operation in $\Z_{2k}$. We define the operation $\cdot$ in 
$\varphi(\Z_{2k})$ as:
\begin{equation}
\label{eq:product1}
\varphi(i)\cdot \varphi(j)=\varphi(i+j)
\end{equation}
for all $i,j\in \Z_{2k}$.
\end{defi}

\begin{lemm} \label{lemm:w_pres}
Let $\varphi:(\Z_{2k},+)\longrightarrow(\Z_2^m,\cdot)$ a Gray map such that $(\varphi(\Z_{2k}), \cdot)$ is a Hamming-compatible code. 
Then $\varphi$ is distance-preserving if and only if $\varphi$ is weight-preserving.
\end{lemm}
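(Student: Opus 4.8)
The plan is to prove both implications directly from the hypothesis, the only extra ingredient being the normalization $\varphi(0)=\zero$, which I would establish first. Since $\varphi(i)\cdot\varphi(0)=\varphi(i+0)=\varphi(i)$ for all $i$, the element $\varphi(0)$ is the identity of the group $(\varphi(\Z_{2k}),\cdot)$. Because the Hamming-compatible action extends $\cdot$ (so it agrees with $\cdot$ on $\varphi(\Z_{2k})\times\varphi(\Z_{2k})$), applying $d(x,x\cdot v)=\pes(v)$ to $x=v=\varphi(0)$ yields $0=d(\varphi(0),\varphi(0)\cdot\varphi(0))=\pes(\varphi(0))$, hence $\varphi(0)=\zero$.

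For the direction ``distance-preserving $\Rightarrow$ weight-preserving'' I would simply specialize the hypothesis to $j=0$: for every $i\in\Z_{2k}$,
$$\pes(\varphi(i))=d(\varphi(i),\zero)=d(\varphi(i),\varphi(0))=d_L(i,0)=\pesLee(i),$$
so $\varphi$ is weight-preserving.

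For the converse the idea is to convert a Hamming distance between two codewords into the Hamming weight of a single codeword, which is exactly what the Hamming-compatible action lets us do. Given $i,j\in\Z_{2k}$, the vector $\varphi(j-i)$ lies in $\varphi(\Z_{2k})\subseteq\Z_2^m$, so it is an admissible second argument for the extended action, and $\varphi(i)\cdot\varphi(j-i)=\varphi(i+(j-i))=\varphi(j)$. Therefore
$$d(\varphi(i),\varphi(j))=d\bigl(\varphi(i),\varphi(i)\cdot\varphi(j-i)\bigr)=\pes(\varphi(j-i))=\pesLee(j-i)=\pesLee(i-j)=d_L(i,j),$$
using Hamming-compatibility in the second equality, the weight-preserving hypothesis in the third, and the symmetry $\pesLee(-a)=\pesLee(a)$ of the Lee weight in the fourth; this gives distance-preservation.

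I do not anticipate a genuine obstacle here: each direction reduces to a short computation. The only points that deserve explicit mention are the legitimacy of using $\varphi(j-i)$ as the displacement vector in the extended action (valid because that action is defined on $\varphi(\Z_{2k})\times\Z_2^m$ and $\varphi(j-i)\in\Z_2^m$), and the bookkeeping fact $\varphi(0)=\zero$, which is what makes the Hamming metric ``see'' the group identity and drives the first implication.
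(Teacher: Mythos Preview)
Your proof is correct and follows essentially the same route as the paper: the converse direction is exactly the chain $d(\varphi(i),\varphi(j))=d(\varphi(i),\varphi(i)\cdot\varphi(j-i))=\pes(\varphi(j-i))=\pesLee(j-i)=d_L(i,j)$ that the paper writes. The only difference is organizational: you establish $\varphi(0)=\zero$ inline to justify the forward implication, whereas the paper treats the forward direction as immediate from the definitions and records $\varphi(0)=\zero$ as a separate lemma afterwards.
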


\begin{proof}
Clearly, if $\varphi$ is distance-preserving then is weight-preserving by definition of $\pes$ and $\pesLee$. \\
Suppose $\varphi$ is weight-preserving, then
\[
d(\varphi(i),\varphi(j))=d(\varphi(i),\varphi(i)\varphi(j-i))=\pes(\varphi(j-i))=w_L(j-i)=d_L(i,j)
\] 
\end{proof}

\begin{lemm}
Let $\varphi:\Z_{2k}\longrightarrow\Z_2^m$ be a Gray map such that $(\varphi(\Z_{2k}),\cdot)$ is a Hamming-compatible code, then 
$\varphi(0)=\zero$.
\end{lemm}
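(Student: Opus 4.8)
The plan is to identify $\varphi(0)$ as the neutral element of the group $(\varphi(\Z_{2k}),\cdot)$ and then use Hamming-compatibility to force its Hamming weight to be $0$. First I would note that the defining relation $\varphi(i)\cdot\varphi(j)=\varphi(i+j)$ gives $\varphi(0)\cdot\varphi(j)=\varphi(j)=\varphi(j)\cdot\varphi(0)$ for every $j\in\Z_{2k}$, so $\varphi(0)$ is the identity of $(\varphi(\Z_{2k}),\cdot)$; in particular it is idempotent, $\varphi(0)\cdot\varphi(0)=\varphi(0)$.

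Since $(\varphi(\Z_{2k}),\cdot)$ is a Hamming-compatible code, the operation $\cdot$ extends to an action $\cdot:\varphi(\Z_{2k})\times\Fn\longrightarrow\Fn$ with $d(x,x\cdot v)=\pes(v)$ for all $x\in\varphi(\Z_{2k})$ and all $v\in\Fn$. Taking $x=v=\varphi(0)$ (which lies in $\varphi(\Z_{2k})\subseteq\Fn$) and using idempotency,
\[
\pes(\varphi(0))=d(\varphi(0),\,\varphi(0)\cdot\varphi(0))=d(\varphi(0),\varphi(0))=0,
\]
so $\varphi(0)=\zero$.

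There is essentially no obstacle here; the only points requiring care are that the extended action restricts to the group operation $\cdot$ on pairs of codewords — so that $\varphi(0)\cdot\varphi(0)$ computed via the action equals $\varphi(0+0)$ — and that a binary vector of weight $0$ must be $\zero$, both immediate from the definitions. One can equally take $x$ arbitrary in $\varphi(\Z_{2k})$ and $v=\varphi(0)$: then $x\cdot\varphi(0)=x$ yields $\pes(\varphi(0))=d(x,x)=0$ directly, in the same spirit as the computation used in Lemma \ref{lemm:w_pres}.
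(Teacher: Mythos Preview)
Your proof is correct and follows essentially the same approach as the paper: the paper picks an arbitrary $a\in\Z_{2k}$ and computes $0=d(\varphi(a),\varphi(a+0))=d(\varphi(a),\varphi(a)\cdot\varphi(0))=\pes(\varphi(0))$, which is precisely the alternative you mention at the end, while your primary argument is just the special case $a=0$. The only cosmetic difference is that you explicitly spell out that $\varphi(0)$ is the group identity before invoking Hamming-compatibility.
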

\begin{proof}
Let $a\in \Z_{2k}$. We know $(\varphi(\Z_{2k}),\cdot)$ is Hamming-compatible so,
\[
0=d(\varphi(a), \varphi(a+0))=d(\varphi(a),\varphi(a)\cdot \varphi(0))= \pes(\varphi(0))
\]
Now, by definition of $\pes()$, we have $\varphi(0)=\zero$.
\end{proof}

\begin{theo}\label{dist-pr}
Let $\varphi:\Z_{2k}\longrightarrow\Z_2^m$ be a Gray map. If $(\varphi(\Z_{2k}),\cdot)$ is a Hamming-compatible code, 
then $\varphi$ is distance-preserving.
\end{theo}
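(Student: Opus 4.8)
The plan is to deduce the theorem from Lemma \ref{lemm:w_pres}: since $(\varphi(\Z_{2k}),\cdot)$ is Hamming-compatible, it is enough to prove that $\varphi$ is \emph{weight}-preserving, i.e.\ that $\pes(\varphi(i))=\pesLee(i)=\min(i,2k-i)$ for all $i$. Write $f(i)=\pes(\varphi(i))$. First I would record the elementary facts: $\varphi(0)=\zero$ by the preceding lemma, so $f(0)=0$; applying the Hamming-compatible action with $x=\varphi(s)$ and $v=\varphi(i)$ (so that $x\star v=\varphi(s)\cdot\varphi(i)=\varphi(s+i)$) gives $d(\varphi(s),\varphi(s+i))=\pes(\varphi(i))=f(i)$ for all $s,i$, whence $f(i)=f(2k-i)$ by symmetry of $d$; and the Gray-map property $d(\varphi(s),\varphi(s+1))=1$ yields $f(1)=1$ and $|f(i+1)-f(i)|=1$ for every $i$ (two binary vectors at Hamming distance $1$ have weights differing by exactly $1$). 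So it remains to show $f(i)=i$ for $0\le i\le k$.

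The device I would use is the sequence of coordinate flips: for each $i$ there is a unique coordinate $c_i$ with $\varphi(i+1)=\varphi(i)+\mathbf{e}_{c_i}$ (indices mod $2k$, all computations in $\Z_2^{m}$), and telescoping together with $\varphi(0)=\zero$ gives $\varphi(s+i)-\varphi(s)=\sum_{t=s}^{s+i-1}\mathbf{e}_{c_t}$. Hence the identity above reads
$$
\pes\!\left(\sum_{t=s}^{s+i-1}\mathbf{e}_{c_t}\right)=f(i)\qquad\mbox{for every }s ,
$$
so that $f(i)=i$ is equivalent to the statement that every length-$i$ window $c_s,\ldots,c_{s+i-1}$ consists of pairwise distinct coordinates.

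Next I would argue by contradiction (the case $k=1$ is trivial, so assume $k\ge2$). If $f(i)\ne i$ for some $i\le k$, then, as $f$ starts $0,1$ and moves in steps of $\pm1$, there is a least $i_0\in\{2,\ldots,k\}$ with $f(i_0)<f(i_0-1)$; thus $f(j)=j$ for $j<i_0$ and $f(i_0)=i_0-2$. Put $q=i_0-1\in\{1,\ldots,k-1\}$. From $f(q)=q$, every length-$q$ window has distinct entries; from $f(q+1)=q-1$, appending $\mathbf{e}_{c_{s+q}}$ to the weight-$q$ vector $\sum_{t=s}^{s+q-1}\mathbf{e}_{c_t}$ must lower its weight, so $c_{s+q}\in\{c_s,\ldots,c_{s+q-1}\}$; but the length-$q$ window starting at $s+1$ is also all distinct, so $c_{s+q}\notin\{c_{s+1},\ldots,c_{s+q-1}\}$. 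These force $c_{s+q}=c_s$ for all $s$, i.e.\ the flip sequence is $q$-periodic, and then
$$
\varphi(2q)-\varphi(0)=\sum_{t=0}^{2q-1}\mathbf{e}_{c_t}=2\sum_{t=0}^{q-1}\mathbf{e}_{c_t}=\zero ,
$$
so $\varphi(2q)=\varphi(0)$ with $0<2q<2k$, contradicting the injectivity of the Gray map $\varphi$. Hence $f(i)=i$ for $0\le i\le k$, and then $f(i)=f(2k-i)=2k-i$ for $k<i<2k$, so $\varphi$ is weight-preserving; Lemma \ref{lemm:w_pres} then gives that it is distance-preserving.

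I expect the extremal step --- turning the two local constraints on the flip sequence into a genuine period $q<k$, and then deriving $\varphi(2q)=\varphi(0)$ to contradict injectivity --- to be the crux; the remaining steps are routine bookkeeping. When writing the full proof one should also check the degenerate case $q=1$ and the endpoint $i_0=k$, where the bound $0<2q<2k$ still holds.
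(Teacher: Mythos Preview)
Your proof is correct. Both you and the paper reduce to weight-preservation via Lemma~\ref{lemm:w_pres}, use the identity $d(\varphi(s),\varphi(s+i))=f(i)$ coming from Hamming-compatibility, and isolate the first index at which $f$ drops. The difference lies in how the contradiction is obtained when this drop occurs before $k$. The paper tracks the weights themselves: after the first peak at $j$ the sequence $f$ descends for $r$ steps, then ascends for $s<r$ steps, and an iteration of this pattern is claimed to trap the subsequent weights strictly inside a range that excludes $1$, contradicting $\pes(\varphi(2k-1))=1$. You instead pass to the flip sequence $(c_t)$ and show that the two constraints $f(q)=q$ and $f(q+1)=q-1$ force $c_{s+q}=c_s$ for every $s$; the resulting exact period $q<k$ yields $\varphi(2q)=\varphi(0)$ and contradicts injectivity directly. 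Your argument is shorter and completely explicit at the crux, whereas the paper's final iterative step (``we can use the same argument starting from $j+r+s$\ldots'') is left somewhat informal; on the other hand, the paper's version makes the oscillatory shape of $f$ beyond the first drop visible, which is what is later used in Theorem~\ref{unic}.
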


\begin{proof}
By the lemma (\ref{lemm:w_pres}), we only must proof that $\varphi$ is weight-preserving.\\
Clearly, $\varphi(0)=\zero$, $\varphi(1)=e_{i_1}$ and $\varphi(2)=e_{i_1}+e_{i_2}$ where $e_{i_s} \in \Z_2^m$ is the vector with $1$ in 
the coordinate $i_s$ and $0$ elsewhere. Let $j\in\Z_{2k}$ such that $\varphi(t)=e_{i_1}+\cdots+e_{i_t}$ $\forall t \leq j$ and 
$\pes(\varphi(j+1))=j-1$ ($j$ exists because $\varphi(2k-1)=1$).\medskip

If $j=k$ then $\pes(\varphi(i))=i=\pesLee(i)$ $\forall i\leq k$ and 
$\pes(\varphi(k+i))=d(\varphi(k),\varphi(2k+i))=d(\varphi(k),\varphi(i))=d(\varphi(k),\varphi(k)\cdot \varphi(k-i))=\pes(\varphi(k-i))=k-i=\pesLee(k+i) \; \forall i\leq k$. 
So, if $j=k$, the proof is finished.\medskip

Suppose $j<k$.
There exists $r\geq 1$ such that $\pes(\varphi(j+i))=\pes(\varphi(j+i-1))-1 \; \forall i\leq r$ and 
$\pes(\varphi(j+r+1))=\pes(\varphi(j+r))+1$. As we know, $d(\varphi(i),\varphi(j+i))=\pes(\varphi(j))=j$, therefore 
$\pes(\varphi(j+i))=j-i \; \forall i\leq r$. If $r=j$ then $\pes(\varphi(j+r))=0$ which is not possible because of the one-to-one
condition of the Gray map. Then $r<j$ and $\pes(\varphi(j+r))>1$.

In the same way, there exists $s \geq 1$ such that $\pes(\varphi(j+r+i))=\pes(\varphi(j+r+i-1))+1 \; \forall i\leq s$ and 
$\pes(\varphi(j+r+s+1)))=\pes(\varphi(j+r+s))-1$. As we know, $d(\varphi(i),\varphi(j+r+i))=\pes(\varphi(j+r))=j-r$, therefore 
$\pes(\varphi(j+r+i))=\pes(\varphi(j+r))+i= j-r+i$. If $s=r$ then $\varphi(j+r+s)=\varphi(j)$ which is not possible, so $s<r$.\\

We can use the same argument starting from $j+r+s$ and we always obtain images in $\Z_2^m$ with weights w such that $1>$w$>j$. This is a 
contradiction with the fact that $\pes(\varphi(2k-1))=1$.
\end{proof}

\bigskip

Let $\codi$ be a subgroup of $(\Z_{2k}^n,+)$ for some $k,n\geq 1$, where + is the usual addition in $\Z_{2k}$ coordinatewisely extended. 
We say that $\codi$ is a $\Z_{2k}$-modulo code or, briefly, a $\Z_{2k}$-code.
We will see a binary representation of any such code as a propelinear code.

Let $\zero^{(i)}$ be the all-zero vector of length $i$ and let $\u^{(j)}$
be the all-one vector of length $j$. We denote by `$\mid$' the concatenation,
i.e. if $x=(x_1,\ldots,x_r)$ and $y=(y_1,\ldots,y_s)$, then
$(x\mid y)=(x_1,\ldots,x_r,y_1,\ldots,y_s)$.\\
\bigskip
Define the Gray map $\phi:\Z_{2k}\longrightarrow\Z_2^k$ such that:
\begin{equation}
\begin{array}{cc}
\label{phi}
(i)& \phi(i)=(\zero^{(k-i)}\mid\u^{(i)})\;\;\;\forall i=0,\ldots,k-1, and \\
(ii)& \phi(i+k)=\phi(i)+\u^{(k)}\;\;\;\forall i=0,\ldots,k-1.
\end{array}
\end{equation}
Define also the associated permutation on $k$ coordinates
\begin{equation}
\label{sigma}
\sigma_j=(1,k,k-1,\ldots,2)^j
\end{equation}
(i.e. $j$ left shifts)
for all vector
$\phi(j)$, $j=0,\ldots,2k-1$.\\

Note that this Gray map $\phi$ is distance-preseving and weight-preserving.

\begin{defi}
 Let $\phi$ be the Gray map defined in (\ref{phi}). For any two elements $\phi(i),\phi(j)\in \phi(\Z_{2k})$ define the product
\begin{equation}
\label{eq:product2}
\phi(i)\cdot\phi(j)=\phi(i)+\sigma_i(\phi(j)) 
\end{equation}
\end{defi}

We are going to prove that the above product is, in fact, the one defined in (\ref{eq:product1}).

\begin{lemm} Let $\phi$ be the Gray map defined in (\ref{phi}). Let $\phi(i)\in \phi(\Z_{2k})$ and $\cdot$ the product defined in 
(\ref{eq:product2}). Then
$$
\phi(i)=\phi(1)^i
%(\cdots((\phi(1)\phi(1))\phi(1))\cdots\phi(1))=(\phi(1)(\cdots(\phi(1)(\phi(1)\phi(1)))\cdots))=\phi(1)^i.
$$
\end{lemm}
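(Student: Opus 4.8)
The plan is to prove the identity by induction on $i$ (with the index of $\phi$ read modulo $2k$), feeding on the recursive description of $\phi$ in (\ref{phi}) and on the explicit form of the shift permutations $\sigma_i$. Throughout, write $e_t\in\Z_2^k$ for the vector with a single $1$ in coordinate $t$, so that $\phi(1)=(\zero^{(k-1)}\mid\u^{(1)})=e_k$, and read $e_{k-0}$ as $e_k$.

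First I would record two elementary facts. (a) Since $\sigma_i$ is ``$i$ left shifts'', applying it to $e_k=(0,\dots,0,1)$ moves the unique $1$ one place to the left at each shift, so $\sigma_i(\phi(1))=e_{k-i}$ for $0\le i\le k-1$; and because $k$ left shifts is the identity we have $\sigma_{k+i}=\sigma_i$, hence $\sigma_{k+i}(\phi(1))=e_{k-i}$ as well. (b) Directly from (\ref{phi}): for $0\le i\le k-2$ the words $\phi(i)=(\zero^{(k-i)}\mid\u^{(i)})$ and $\phi(i+1)$ differ only in coordinate $k-i$, so $\phi(i+1)+\phi(i)=e_{k-i}$; at the first wrap-around, $\phi(k)+\phi(k-1)=\u^{(k)}+(\zero^{(1)}\mid\u^{(k-1)})=e_1$; for $0\le i\le k-2$ clause (ii) of (\ref{phi}) gives $\phi(k+i+1)+\phi(k+i)=\phi(i+1)+\phi(i)=e_{k-i}$; and at the second wrap-around $\phi(0)+\phi(2k-1)=\zero+e_1=e_1$. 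Comparing with (a), in every case one has $\phi(i+1)+\phi(i)=\sigma_i(\phi(1))$.

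The induction is then immediate. The base cases $i=0$ and $i=1$ hold because $\phi(0)=\zero$ is the (two-sided) identity of $(\phi(\Z_{2k}),\cdot)$ — indeed $\phi(0)\cdot\phi(j)=\zero+\sigma_0(\phi(j))=\phi(j)$ — and because $\phi(1)^1=\phi(1)$ trivially. For the inductive step, assuming $\phi(1)^i=\phi(i)$, I use the definition (\ref{eq:product2}) of $\cdot$ together with the combined fact of the previous paragraph:
\[
\phi(1)^{i+1}=\phi(1)^i\cdot\phi(1)=\phi(i)\cdot\phi(1)=\phi(i)+\sigma_i(\phi(1))=\phi(i)+\phi(i+1)+\phi(i)=\phi(i+1),
\]
since we are computing in $\Z_2^k$. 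This closes the induction; in particular, taking $i=2k-1$ recovers $\phi(1)^{2k}=\phi(0)=\zero$, so $\phi(1)$ has order $2k$ by injectivity of $\phi$.

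I expect the only point requiring care is the bookkeeping at the two wrap-around indices $i=k-1$ and $i=2k-1$, where the clean picture ``$\phi$ picks up one more $1$ from left to right'' fails and one must instead invoke clause (ii) of (\ref{phi}); everything else is routine. It is worth noting, finally, that $\phi(1)^{i+1}$ is unambiguous here precisely because it is being built as the left-associated product $\phi(1)^i\cdot\phi(1)$, so no prior associativity of $\cdot$ is needed — associativity, and the identification of $\cdot$ with the product of (\ref{eq:product1}), will be consequences of this lemma rather than inputs to it.
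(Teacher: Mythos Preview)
Your proof is correct and follows essentially the same strategy as the paper's own proof: establish the one-step recursion $\phi(i)\cdot\phi(1)=\phi(i+1)$ and then iterate. The paper dispatches this in a single line (``It is easy to verify that $\phi(i)=\phi(i-1)\cdot\phi(1)=\phi(1)\cdot\phi(i-1)$''), whereas you have actually carried out that verification in full, including the two wrap-around cases and the careful remark that $\phi(1)^i$ is to be read as the left-associated product so that no prior associativity is assumed; this last point is a genuine improvement in rigour over the paper's treatment.
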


\begin{proof}
It is easy to verify that $\phi(i)=\phi(i-1)\cdot\phi(1)=\phi(1)\cdot\phi(i-1)$. Appliying
this repeadly yields the result.
\end{proof}

\begin{prop}
$(\phi(\Z_{2k}),\cdot)$ is a group, with $\phi$ and $\cdot$ defined in (\ref{phi}) and (\ref{eq:product2}) respectively.
\end{prop}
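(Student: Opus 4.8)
The plan is to prove that the product (\ref{eq:product2}) coincides with the product (\ref{eq:product1}); that is, that
$$
\phi(i)\cdot\phi(j)=\phi\big((i+j)\bmod 2k\big)\qquad\text{for all }i,j\in\Z_{2k}.
$$
Once this identity is in hand the proposition is immediate: $\phi$ is one-to-one (being a Gray map), hence a bijection from $\Z_{2k}$ onto $\phi(\Z_{2k})$, and the identity says precisely that this bijection carries the addition of $(\Z_{2k},+)$ to $\cdot$; therefore $(\phi(\Z_{2k}),\cdot)$ is a group, in fact cyclic of order $2k$ with identity $\phi(0)=\zero$.

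To prove the displayed identity I would fix $i$ and induct on $j$ running through $0,1,\ldots,2k-1$. The base case $j=0$ is trivial, since $\sigma_i(\zero)=\zero$ gives $\phi(i)\cdot\phi(0)=\phi(i)+\sigma_i(\phi(0))=\phi(i)$. For the inductive step I would use the preceding lemma to write $\phi(j+1)=\phi(j)\cdot\phi(1)=\phi(j)+\sigma_j(\phi(1))$, and recall that each $\sigma_t$ is a coordinate permutation and hence $\Z_2$-linear. Then
$$
\phi(i)\cdot\phi(j+1)=\phi(i)+\sigma_i\big(\phi(j)+\sigma_j(\phi(1))\big)=\big(\phi(i)+\sigma_i(\phi(j))\big)+(\sigma_i\circ\sigma_j)(\phi(1)).
$$
The first bracket equals $\phi(i)\cdot\phi(j)=\phi\big((i+j)\bmod 2k\big)$ by the induction hypothesis. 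For the second, (\ref{sigma}) shows $\sigma_t=\sigma_1^{t}$ with $\sigma_1$ a single $k$-cycle, so $\sigma_i\circ\sigma_j=\sigma_1^{i+j}$, and since $\sigma_1^{k}=\mathrm{Id}$ this equals $\sigma_{(i+j)\bmod 2k}$, the permutation attached to $\phi\big((i+j)\bmod 2k\big)$. Hence $\phi(i)\cdot\phi(j+1)=\phi\big((i+j)\bmod 2k\big)\cdot\phi(1)=\phi\big((i+j+1)\bmod 2k\big)$, the last equality again by the preceding lemma (read with indices in $\Z_{2k}$, which also covers the wrap-around from $2k-1$ to $0$). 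This closes the induction.

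I do not expect a genuine obstacle; the only points needing care are bookkeeping ones. First, the indices of the permutations $\sigma_t$ live modulo $k$ while the indices of $\phi$ live modulo $2k$; these are reconciled by $k\mid 2k$, which is exactly what makes $\sigma_1^{i+j}$ agree with the permutation $\sigma_{(i+j)\bmod 2k}$ prescribed by (\ref{sigma}). Second, the two-clause definition (\ref{phi}) of $\phi$ and the resulting change of behaviour at the threshold $j=k$ never have to be treated explicitly, since they are already absorbed into the relation $\phi(t)=\phi(t-1)\cdot\phi(1)$ provided by the preceding lemma, on which the whole induction rests. An alternative route would bypass that lemma and verify $\phi(i)+\sigma_i(\phi(j))=\phi\big((i+j)\bmod 2k\big)$ directly by a case analysis on the positions of the runs of $1$'s in $\phi(i)$ and in the cyclic shift $\sigma_i(\phi(j))$, splitting on the relative sizes of $i$, $j$ and $k$; this is correct but more tedious, and there the case analysis would be the main, if routine, chore.
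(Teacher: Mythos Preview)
Your argument is correct and rests on the same lemma ($\phi(i)=\phi(1)^i$) as the paper's. The organizational difference is that you first prove the identity $\phi(i)\cdot\phi(j)=\phi(i+j)$ by induction on $j$ and then transport the group structure from $(\Z_{2k},+)$, whereas the paper verifies the group axioms directly from the power representation---associativity via $(\phi(1)^i\cdot\phi(1)^j)\cdot\phi(1)^\ell=\phi(1)^{i+j+\ell}$, inverses via $\phi(1)^i\cdot\phi(1)^{2k-i}=\phi(1)^{2k}=\phi(0)$---and only afterwards, in the subsequent Corollary, records the homomorphism $\phi(i+j)=\phi(i)\cdot\phi(j)$. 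Your route has the mild advantage of making the step $\phi(1)^i\cdot\phi(1)^j=\phi(1)^{i+j}$ (which the paper invokes before associativity is formally in hand) fully explicit via the induction, and it delivers the content of that Corollary in the same breath.
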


\begin{proof}
We have that
$$
(\phi(i)\cdot\phi(j))\cdot\phi(\ell)=(\phi(1)^i\cdot\phi(1)^j)\cdot\phi(1)^\ell=\phi(1)^{i+j+\ell}
=\phi(i)\cdot(\phi(j)\cdot\phi(\ell)),
$$
for all $i,j,\ell\in\Z_{2k}$. Therefore, the operation is associative.

It is clear that $\zero^{(k)}=\phi(0)$ acts as the identity element. On the other hand,
given $\phi(i)\in\phi(\Z_2^k)$, we have that
$$
\phi(i)\cdot\phi(k-i)=\phi(1)^{i+k-i}=\phi(1)^k=\phi(k)=\phi(0)=\zero^{(k)}.
$$
\end{proof}

\begin{coro}
 Let $\phi$ defined in (\ref{phi}) and $\cdot$ the operation given in (\ref{eq:product2}). The map 
$\;\phi:(\Z_{2k},+)\longrightarrow (\phi(\Z_{2k}),\cdot)$ is a group homomorphism and so the operation in (\ref{eq:product1}) and 
(\ref{eq:product2}) are the same.
\end{coro}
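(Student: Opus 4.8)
The plan is to show that $\phi:(\Z_{2k},+)\longrightarrow(\phi(\Z_{2k}),\cdot)$ is a bijective group homomorphism, from which the equality of the two operations follows immediately. First I would observe that $\phi$ is a bijection onto its image by the very definition of a Gray map (condition (i), one-to-oneness). So the only thing left is the homomorphism property: $\phi(i+j)=\phi(i)\cdot\phi(j)$ for all $i,j\in\Z_{2k}$.

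For the homomorphism property I would invoke the preceding lemma, which gives $\phi(i)=\phi(1)^i$ for every $i\in\Z_{2k}$, where powers are taken with respect to $\cdot$. Since the previous proposition established that $(\phi(\Z_{2k}),\cdot)$ is a group in which $\phi(1)$ has order dividing $2k$ (indeed $\phi(1)^{2k}=\phi(2k)=\phi(0)=\zero^{(k)}$), the map $i\mapsto\phi(1)^i$ is well defined on $\Z_{2k}$ and satisfies $\phi(1)^i\cdot\phi(1)^j=\phi(1)^{i+j}$. Rewriting this via the lemma gives exactly $\phi(i)\cdot\phi(j)=\phi(i+j)$, i.e. $\phi$ is a group homomorphism; being bijective, it is an isomorphism $(\Z_{2k},+)\cong(\phi(\Z_{2k}),\cdot)$.

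Finally, I would close the loop with the definitions of the two products. Equation (\ref{eq:product1}) defines a product on $\phi(\Z_{2k})$ by $\varphi(i)\cdot\varphi(j)=\varphi(i+j)$; we have just shown that the product $\cdot$ of (\ref{eq:product2}) also satisfies $\phi(i)\cdot\phi(j)=\phi(i+j)$. Since $\phi$ is onto $\phi(\Z_{2k})$, every pair of elements of $\phi(\Z_{2k})$ is of the form $(\phi(i),\phi(j))$, so the two operations agree on all inputs and hence coincide. This also makes $\phi$ a group homomorphism in the sense of (\ref{eq:product1}), as claimed.

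I do not anticipate a serious obstacle here: the real content was packaged into the earlier lemma $\phi(i)=\phi(1)^i$ and the proposition that $(\phi(\Z_{2k}),\cdot)$ is a group, so this corollary is essentially a bookkeeping step. The only point requiring a little care is making sure the exponent $i$ is genuinely well defined modulo $2k$ — that is, that $\phi(1)^{2k}$ is the identity — which follows from $\phi(1)^{2k}=\phi(2k)=\phi(0)=\zero^{(k)}$ together with the group structure just established.
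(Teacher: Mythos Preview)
Your proof is correct and follows essentially the same approach as the paper: both use the preceding lemma $\phi(i)=\phi(1)^i$ to compute $\phi(i)\cdot\phi(j)=\phi(1)^i\cdot\phi(1)^j=\phi(1)^{i+j}=\phi(i+j)$, establishing the homomorphism property and hence the coincidence of the two operations. Your version is simply more explicit about bijectivity and the well-definedness of $\phi(1)^i$ modulo $2k$, points the paper leaves implicit in its one-line argument.
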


\begin{proof}
Given $i,j\in\Z_{2k}$, we have
$$
\phi(i+j)=\phi(1)^{i+j}=\phi(1)^i\cdot\phi(1)^j=\phi(i)\cdot\phi(j).
$$
\end{proof}

\begin{theo}
\label{unic}
Let $\varphi:\Z_{2k}\longrightarrow\Z_2^l$ be a Gray map. If $(\varphi(\Z_{2k}),\cdot)$ is a Hamming-compatible code where $\cdot$ is
the operation defined in (\ref{eq:product1}), then $\varphi$ is unique up to coordinate permutation.
\end{theo}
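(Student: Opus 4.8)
The plan is to show that any Hamming-compatible Gray map $\varphi:\Z_{2k}\to\Z_2^l$ with the product (\ref{eq:product1}) must coincide, up to a permutation of the $l$ coordinates, with the map $\phi$ of (\ref{phi}). The starting point is Theorem \ref{dist-pr}: since $(\varphi(\Z_{2k}),\cdot)$ is Hamming-compatible, $\varphi$ is distance-preserving and weight-preserving, so in particular $\pes(\varphi(i))=\pesLee(i)$ for every $i$. Combined with the earlier lemma giving $\varphi(0)=\zero$, this already forces $\varphi(1)$ to be a weight-one vector $e_{a_1}$ for some coordinate $a_1$.

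Next I would reconstruct the $\varphi(i)$ inductively for $i=1,\dots,k$. Because $\pesLee(i)=i$ for $i\le k$ and $d(\varphi(i-1),\varphi(i))=d_L(i-1,i)=1$, passing from $\varphi(i-1)$ to $\varphi(i)$ changes exactly one coordinate; since the weight goes up by one, that coordinate must flip from $0$ to $1$. Hence $\varphi(i)$ is obtained from $\varphi(i-1)$ by switching on a new coordinate $a_i$, and the $a_1,\dots,a_k$ are necessarily distinct (the weight reaches $k$ at $i=k$, so $k$ distinct coordinates are used and there are no others available among those $k$ "active" ones). Therefore $\varphi(i)=e_{a_1}+\cdots+e_{a_i}$ for $i\le k$. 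Now define the coordinate permutation $\tau\in{\cal S}_l$ by $\tau(a_s)=k-s+1$ for $s=1,\dots,k$ (extended arbitrarily to any remaining coordinates, which in fact will not occur because $l=k$ follows from $\pes(\varphi(k))=k$ together with $\varphi(2k-1)$ having weight $1$ and the image spanning all used coordinates — one should check $l=k$ explicitly here). Then $\tau\circ\varphi(i)=(\zero^{(k-i)}\mid\u^{(i)})=\phi(i)$ for $0\le i\le k$.

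It remains to handle $i=k,\dots,2k-1$. Here the key is that $\varphi$ is a group homomorphism onto $(\varphi(\Z_{2k}),\cdot)$: indeed, since $\varphi(i)\cdot\varphi(j)=\varphi(i+j)$ holds by definition of the product (\ref{eq:product1}), and we have matched $\varphi$ with $\phi$ on the generator $i=1$ (after applying $\tau$, both send $1$ to $e_{a}$ with $\tau(a)$ being the appropriate coordinate), we get $\tau\circ\varphi(i)=(\tau\circ\varphi(1))^{\cdot i}$. But the operation $\cdot$ on a Hamming-compatible code generated by a single weight-one element is completely determined: from Lemma \ref{Lem1}-type reasoning, $x\cdot v$ is the unique element at Hamming distance $\pes(v)$ from $x$ consistent with the group law, and since $\phi(1)$ generates $(\phi(\Z_{2k}),\cdot)$ by the lemma above, the powers of $\tau\circ\varphi(1)$ under $\cdot$ must trace out exactly the same $2k$ vectors as the powers of $\phi(1)$, namely $\phi(0),\phi(1),\dots,\phi(2k-1)$, in that cyclic order. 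Hence $\tau\circ\varphi=\phi$, proving uniqueness up to the coordinate permutation $\tau^{-1}$.

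The main obstacle I expect is the careful justification that the product $\cdot$ is rigid once $\varphi(1)$ is pinned down — i.e. that there is no freedom in how $\varphi(k),\dots,\varphi(2k-1)$ are placed. This needs the observation that in a Hamming-compatible code the map $v\mapsto x\cdot v$ is a Hamming isometry fixing the role of each weight-one vector (cf. the distance-compatibility axioms (i)--(iii) and Lemma \ref{Lem1}), so the entire Cayley graph of the cyclic group $(\varphi(\Z_{2k}),\cdot)$ embeds in the hypercube in a way determined by the image of the generator. One also has to confirm that $l=k$ (no extra unused coordinates), which follows because the closed walk $\varphi(0),\varphi(1),\dots,\varphi(2k-1),\varphi(0)$ in the hypercube, having all the weight constraints of $\phi$, can only revisit weight-$0$ at the start/end and thus uses precisely $k$ coordinates. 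Everything else is the routine induction sketched above.
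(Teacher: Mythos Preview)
Your treatment of $i\le k$ is exactly the paper's: both reconstruct $\varphi(i)=e_{i_1}+\cdots+e_{i_i}$ from weight-preserving together with $d(\varphi(i-1),\varphi(i))=1$, and both build the permutation sending $e_{i_s}\mapsto e_{k-s+1}$.

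The gap is in your second half. You write $\tau\circ\varphi(i)=(\tau\circ\varphi(1))^{\cdot\,i}$ and then claim this equals $\phi(i)$ because $\phi(1)$ generates $(\phi(\Z_{2k}),\cdot)$. But the operation $\cdot$ of (\ref{eq:product1}) is \emph{defined through the Gray map itself}: the product attached to $\varphi$ and the product attached to $\phi$ are, a priori, different binary operations on possibly different subsets of $\Z_2^k$. The identity $(\tau\circ\varphi(1))^{\cdot\,i}=\tau\circ\varphi(i)$ is a tautology for the $\varphi$-product transported by $\tau$; it says nothing about the $\phi$-product, so no comparison with $\phi(i)$ follows. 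Your Cayley-graph heuristic does not close this either: Hamming-compatibility only gives $d(x,x\cdot v)=\pes(v)$, not that $v\mapsto x\cdot v$ is a Hamming isometry (that is Lemma~\ref{Lem1}, which requires propelinearity, not assumed here). Concretely, for $k=3$ the closed walk $000,001,011,111,101,100,000$ has all the right weights and consecutive distances and starts with the same edge as $\phi$, yet is not $\phi$; it is excluded only because $d(101,001)=1\neq d_L(4,1)=3$.

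The paper stays with the distance-preserving property from Theorem~\ref{dist-pr} for the second half as well. One argues (inductively on $j$) that $\varphi(k+j)$ has weight $k-j$, lies at distance $1$ from $\varphi(k+j-1)$, and lies at distance $k$ from $\varphi(j)=e_{i_1}+\cdots+e_{i_j}$; these three metric constraints force $\varphi(k+j)=\varphi(k)+e_{i_1}+\cdots+e_{i_j}$, which is the paper's second bullet $\varphi(j+k)=\u^{(k)}+e_{i_1}+\cdots+e_{i_j}$. Then the single permutation you already built carries $\varphi$ onto $\phi$ for all $i$. So the fix is not the group law but simply to continue your own inductive metric argument on the descending side.
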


\begin{proof}
($\varphi(\Z_{2k}),\cdot)$ is a Hamming-compatible code and, by Theorem \ref{dist-pr}, $\varphi$ has the following properties:
\begin{itemize}
\item $\varphi(j)=e_{i_1}+\cdots +e_{i_j}$, for $j=1,\ldots,k$.
\item $\varphi(j+k)=\u^{(k)} +e_{i_1}+\cdots +e_{i_j}$, for $j=1,\ldots,k$.
\end{itemize}   
 where $e_{i_s} \in \Z_2^l$ is the vector with $1$ in the coordinate $i_s$ and $0$ elsewhere.\\
For $j=1,\ldots,k$, let $\mu_j$ be the transposition such that $\mu_j(e_{i_j})=e_{k-j+1}$. Let $\mu$ be the permutation whose 
decomposition in product of transpositions is $\mu_1\cdot\ldots\cdot\mu_k$. \\
Now it is easy to check that  $\phi=\mu\circ\varphi$, where $\phi$ is the map defined in (\ref{phi}).

\end{proof}

\begin{rema}
If $\varphi:\Z_{2k}\longrightarrow\Z_2^l$ is a Gray map, we have $l\geq k $ and, by the last theorem, if $l>k$ there are useless 
coordinates. Thus we can assume $l=k$.
\end{rema}

\begin{defi}\label{def:ext-Phi}
We define the extended map
$\Phi:\Z_{2k}^n\longrightarrow\Z_2^{kn}$ such that
$\Phi(j_1,\ldots,j_n)=(\phi(j_1),\ldots,\phi(j_n))$, where $\phi$ is defined in (\ref{phi}). Finally, we define
the permutations $\pi_x=(\sigma_{j_1}|\cdots|\sigma_{j_n}$), for $x=\Phi(j_1,\ldots,j_n)$, 
where $\sigma_i$ is defined in (\ref{sigma}). 
\end{defi}

Next theorem will prove that given a $\Z_{2k}$-code of length $n$, there exists a propelinear code of lenght $kn$ such that both codes
are isomorphic. The isomorphism beetwen them extends the usual structure in $\Z_{2k}$ (+) to the propelinear structure 
in $\Z_2^k$.

\begin{theo}\label{theo:isocodes}
If $\codi$ is a $\Z_{2k}$-code, then $\Phi(\codi)$ is a propelinear code with
associated permutation $\pi_x$ for all codeword $x\in\Phi(\codi)$.
\end{theo}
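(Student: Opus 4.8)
The plan is to verify the two defining properties of a propelinear code for $\Phi(\codi)$ with the prescribed permutations $\pi_x$. The natural strategy is to reduce everything to the single-coordinate case already developed in the excerpt, since both $\Phi$ and $\pi_x$ act coordinatewise and $\codi$ is an additive subgroup of $\Z_{2k}^n$. First I would fix a codeword $x = \Phi(j_1,\ldots,j_n) \in \Phi(\codi)$ and an arbitrary $y = \Phi(\ell_1,\ldots,\ell_n) \in \Phi(\codi)$, and compute $x + \pi_x(y)$ block by block. On the $t$-th block of $k$ coordinates this is exactly $\phi(j_t) + \sigma_{j_t}(\phi(\ell_t))$, which by definition~(\ref{eq:product2}) equals $\phi(j_t)\cdot\phi(\ell_t)$, and by the Corollary equals $\phi(j_t + \ell_t)$. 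Hence $x + \pi_x(y) = \Phi(j_1+\ell_1,\ldots,j_n+\ell_n) = \Phi\big((j_1,\ldots,j_n)+(\ell_1,\ldots,\ell_n)\big)$, which lies in $\Phi(\codi)$ because $\codi$ is closed under $+$. This establishes property~1.

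For property~2 I must show $\pi_x\circ\pi_y = \pi_z$ where $z = x + \pi_x(y)$. Again this decomposes blockwise: on the $t$-th block, $\pi_x\circ\pi_y$ restricts to $\sigma_{j_t}\circ\sigma_{\ell_t}$. From the definition~(\ref{sigma}), $\sigma_j = \rho^j$ where $\rho = (1,k,k-1,\ldots,2)$ is a fixed $k$-cycle, so $\sigma_{j_t}\circ\sigma_{\ell_t} = \rho^{j_t}\rho^{\ell_t} = \rho^{j_t+\ell_t} = \sigma_{j_t+\ell_t}$, where the exponent is read modulo $k$ — but since $\rho$ has order $k$ and, more importantly, $\sigma$ is indexed by elements of $\Z_{2k}$ via $\sigma_j = \rho^{j \bmod k}$ effectively, I should check that $\sigma_{j_t+\ell_t}$ (with the sum taken in $\Z_{2k}$) is the correct permutation. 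Since from property~1 we already identified the $t$-th block of $z$ as $\phi(j_t+\ell_t)$, by Definition~\ref{def:ext-Phi} the $t$-th block of $\pi_z$ is precisely $\sigma_{j_t+\ell_t}$, matching $\sigma_{j_t}\circ\sigma_{\ell_t}$. Assembling the blocks gives $\pi_x\circ\pi_y = \pi_z$.

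The one point deserving care — and the main obstacle — is the consistency of the indexing $j \mapsto \sigma_j$ as a map from $\Z_{2k}$: since $\sigma_j = \rho^j$ with $\rho$ of order $k$, we have $\sigma_{j+k} = \sigma_j$, so $\sigma$ is really a function of $j \bmod k$, and the composition rule $\sigma_i\circ\sigma_j = \sigma_{i+j}$ must be checked to be compatible with addition in $\Z_{2k}$ (not just in $\Z_k$). This is fine because reduction $\Z_{2k}\to\Z_k$ is a ring homomorphism, so $\sigma_i\circ\sigma_j = \rho^{(i+j)\bmod k} = \sigma_{i+j}$ with $i+j$ computed in $\Z_{2k}$; the apparent mismatch between the order $k$ of $\rho$ and the modulus $2k$ is harmless. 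I would state this as a short preliminary observation (or invoke the already-proved fact that $\phi(i)\cdot\phi(j) = \phi(i+j)$ in $\Z_{2k}$, which implicitly encodes it) and then the blockwise verification of properties~1 and~2 is routine.
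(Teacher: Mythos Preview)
Your proposal is correct and follows essentially the same route as the paper's proof: a blockwise verification of the two propelinear axioms, using $\phi(j_t)+\sigma_{j_t}(\phi(\ell_t))=\phi(j_t+\ell_t)$ for property~1 and $\sigma_{j_t}\circ\sigma_{\ell_t}=\sigma_{j_t+\ell_t}$ for property~2. The paper expresses the first identity via the power formula $\phi(i)=\phi(1)^i$ rather than citing the Corollary, and it does not explicitly discuss the $\Z_{2k}$-versus-$\Z_k$ indexing of $\sigma$, but these are cosmetic differences.
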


\begin{proof}
Let $x=\Phi(j_1,\ldots,j_n)=(\phi(j_1),\ldots,\phi(j_n))$ and
$y=\Phi(i_1,\ldots,i_n)=(\phi(i_1),\ldots,\phi(i_n))$ be two codewords.
Then,
$$
x+\pi_x(y)=(\phi(j_1)+\sigma_{j_1}(\phi(i_1)),\ldots,
\phi(j_n)+\sigma_{j_n}(\phi(i_n)).
$$
For any coordinate, say $r$, we have that
$$
\phi(j_r)+\sigma_{j_r}(\phi(i_r))=\phi(1)^{j_r}\phi(1)^{i_r}=\phi(1)^{
j_r+i_r}=\phi(j_r+i_r).
$$
Thus,
$$
x+\pi_x(y)=(\phi(j_1+i_1),\ldots,\phi(j_n+i_n))=\Phi((j_1,\ldots,j_n)
+(i_1,\ldots,i_n)).
$$
Therefore, it is clear that $x+\pi_x(y)\in\Phi(\codi)$.

On the other hand, the associated permutation of $\phi(j_r+i_r)$ is 
$$
\sigma_{j_r+i_r}=(1,k,k-1,\ldots,2)^{j_r+i_r}=\sigma_{j_r}\circ\sigma_{i_r},
$$
hence, if $z=x+\pi_x(y)$, then $\pi_z=\pi_x\circ\pi_y$.
\end{proof}

\begin{coro}
The map $\Phi:(\codi,+)\longrightarrow (\Phi(\codi),\star)$ is a group
isomorphism, where $x\star y=
x+\pi_x(y)$ for all $x,y\in\Phi(\codi)$.
\end{coro}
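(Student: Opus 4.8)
The plan is to show that $\Phi$ is simultaneously a bijection and a homomorphism, from which the isomorphism claim is immediate; essentially all the work has already been done in Theorem~\ref{theo:isocodes}.

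First I would check that $\Phi$ restricts to a bijection $\codi\to\Phi(\codi)$. Since $\phi$ is a Gray map it is one-to-one by definition, so its coordinatewise extension $\Phi:\Z_{2k}^n\to\Z_2^{kn}$ is injective, and in particular it is a bijection from $\codi$ onto its image $\Phi(\codi)$. Moreover $\codi$ is a group, being a subgroup of $(\Z_{2k}^n,+)$, and by Theorem~\ref{theo:isocodes} together with the general discussion on propelinear codes in Section~2, $(\Phi(\codi),\star)$ is a group with identity $\zero$, the operation $x\star y=x+\pi_x(y)$ being well defined and closed on $\Phi(\codi)$.

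Next I would verify the homomorphism identity $\Phi(a+b)=\Phi(a)\star\Phi(b)$ for all $a,b\in\codi$. Writing $a=(j_1,\ldots,j_n)$ and $b=(i_1,\ldots,i_n)$, and setting $x=\Phi(a)$, $y=\Phi(b)$, one computes coordinatewise, using the lemma $\phi(i)=\phi(1)^i$, that $\phi(j_r)+\sigma_{j_r}(\phi(i_r))=\phi(1)^{j_r}\phi(1)^{i_r}=\phi(1)^{j_r+i_r}=\phi(j_r+i_r)$, hence $x\star y=x+\pi_x(y)=\Phi(a+b)$. This is literally the first half of the proof of Theorem~\ref{theo:isocodes}, now read as the statement that $\Phi$ preserves the operations.

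Combining the two facts, $\Phi$ is a bijective group homomorphism and therefore a group isomorphism. I do not expect any genuine obstacle: the corollary is essentially a repackaging of Theorem~\ref{theo:isocodes}, and the only points that need to be made explicit are the injectivity of $\Phi$ (inherited from the Gray-map property of $\phi$) and the recognition that the permutation identity $\pi_z=\pi_x\circ\pi_y$ with $z=x+\pi_x(y)$, already proved there, guarantees that $\star$ behaves as the image of $+$ under $\Phi$ rather than merely agreeing with it pointwise.
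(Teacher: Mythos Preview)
Your proposal is correct and follows essentially the same approach as the paper: the paper's proof simply notes that the computation in Theorem~\ref{theo:isocodes} gives $x\star y=\Phi(\Phi^{-1}(x)+\Phi^{-1}(y))$ and that $\Phi$ is clearly bijective, which is exactly your argument in condensed form. Your version is more explicit about why $\Phi$ is injective and why $(\Phi(\codi),\star)$ is a group, but the underlying logic is identical.
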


\begin{proof}
As we have seen in the previous proof, $x\star y=\Phi(\Phi^{-1}(x)+\Phi^{-1}(y))$ and, clearly, $\Phi$ is bijective.
\end{proof}

In \cite{Jaume} it is shown that linear and $\Z_4$-linear codes are
translation invariant. Now, we show that for $k>2$ any $\Z_{2k}$-code,
viewed as a binary propelinear code, is not translation invariant according
to the classification given in \cite{Jaume}.

\begin{prop}
If $k>2$ and $\codi\in\Z_{2k}^n$, then $\Phi(\codi)$ is a propelinear but
not translation invariant code.
\end{prop}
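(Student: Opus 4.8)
The propelinearity of $\Phi(\codi)$, with the permutations $\pi_x$, is already furnished by Theorem~\ref{theo:isocodes}, whose Corollary moreover identifies $(\Phi(\codi),\star)$ with $(\codi,+)$ as groups; so the whole task is to show that, for $k>2$, this propelinear code is \emph{not} translation invariant. The plan is to play the group structure of $\Phi(\codi)$ off against the description of translation invariant propelinear codes recalled in Section~2.

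The main argument has two ingredients. First, since $\Phi:(\codi,+)\longrightarrow(\Phi(\codi),\star)$ is a group isomorphism, $(\Phi(\codi),\star)$ is abelian and (provided $\codi$ itself contains an element of order $2k$ — the point discussed at the end) it contains an element of additive order $2k$, which is $2k\geq 6$ because $k>2$. Second, by the classification quoted from \cite{Jaume}, a translation invariant propelinear code is, under $\star$, isomorphic to a subgroup of $\Z_2^{k_1}\oplus\Z_4^{k_2}\oplus{\cal Q}_8^{k_3}$; every element of $\Z_2$, of $\Z_4$ and of ${\cal Q}_8$ has order $1$, $2$ or $4$, and hence the same holds in any direct product of copies of these groups and in any subgroup of such a product. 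Combining the two: if $\Phi(\codi)$ were translation invariant, its group would have all element orders dividing $4$, contradicting the element of order $2k\geq 6$. Therefore $\Phi(\codi)$ is a propelinear but not translation invariant code.

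One can also exhibit the failure of translation invariance by hand, without appealing to \cite{Jaume}, which puts the shift permutations $\sigma_j$ to work. In a coordinate carrying a $\Z_{2k}$ generator (say $(1,0,\ldots,0)\in\codi$), take $x=\Phi(1,0,\ldots,0)=(\phi(1)\mid\zero\mid\cdots\mid\zero)$, $y=\zero$ and $u=\bigl((1,0,1,0,\ldots)\mid\zero\mid\cdots\mid\zero\bigr)$ with an alternating first block of length $k$. Since $\pi_x=(\sigma_1\mid Id\mid\cdots\mid Id)$, the vectors $x\star u$ and $y\star u$ coincide outside the first block, while in the first block a short count with the cyclic shift $\sigma_1$ gives $d(x\star u,y\star u)=\pes\bigl(\phi(1)+\sigma_1(1,0,1,0,\ldots)+(1,0,1,0,\ldots)\bigr)\in\{k-1,k\}$, whereas $d(x,y)=\pes(\phi(1))=1$; for $k>2$ these disagree. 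In either approach the manipulations with $\sigma_1$ (respectively with subgroups of $\Z_2^{k_1}\oplus\Z_4^{k_2}\oplus{\cal Q}_8^{k_3}$) are routine; the one step that genuinely needs care is the first one, namely guaranteeing that $(\codi,+)\cong(\Phi(\codi),\star)$ really contains an element of order greater than $4$, equivalently that $\codi$ is not, up to equivalence, already a code over $\Z_2$ and $\Z_4$ alone. This is automatic whenever some coordinate of $\codi$ generates the full group $\Z_{2k}$, in particular when $\codi=\Z_{2k}^n$, and it is exactly the hypothesis $k>2$ that makes such an element available — for $k\leq 2$ every element of a $\Z_{2k}$-code already has order dividing $4$, so the classification of \cite{Jaume} gives no obstruction.
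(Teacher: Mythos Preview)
Your argument is correct, and your first approach --- via the classification from \cite{Jaume} that every translation invariant propelinear code embeds in $\Z_2^{k_1}\oplus\Z_4^{k_2}\oplus{\cal Q}_8^{k_3}$, hence has exponent dividing $4$ --- is a genuinely different route from the paper's. The paper argues by a single explicit witness: with $z=(1,0,\ldots,0,1)\in\F^{\,k}$ one checks that $d(\zero^{(k)}\star z,\phi(1)\star z)=3\neq 1=d(\zero^{(k)},\phi(1))$. Your second approach is in the same spirit but uses the alternating vector $(1,0,1,0,\ldots)$, which makes the computation heavier; the paper's choice of $z$ gives the cleaner constant $3$. What your structural argument buys is conceptual clarity (the obstruction is the element of order $2k>4$), while the paper's buys a two-line proof with no appeal to the classification. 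You are also more explicit than the paper about the implicit hypothesis: both arguments need $\codi$ to actually contain an element whose image under $\Phi$ plays the role of $\phi(1)$ in some block (equivalently, an element of order $>4$); the paper silently assumes this, whereas you flag it.
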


\begin{proof}
Consider the vector $z=(1,0,\ldots,0,1)\in\F^{\:k}$. Then it is easy to check
that $d(\zero^{(k)}\star z,\phi(1)\star z)=3\neq d(\zero^{(k)},\phi(1))=1$.
\end{proof}

\medskip

We have seen that starting from a $\Z_{2k}$-code $\codi$, of length $n$, the code $\Phi(\codi)$ with $\Phi$ (see Definition \ref{def:ext-Phi}) is a propelinear code of 
length $kn$ and both codes are isomorphic (Theorem \ref{theo:isocodes}). As we defined $\Phi$, the minimum Hamming distance in
$\Phi(\codi)$ is exactly the minimum Lee distance in $\codi$ but it is at least the minimum Hamming distance in $\codi$. 

Let $N$ be the number of codewords of $\codi$; clearly, it is also the codewords number of $\Phi(\codi)$.
\\
Let $R=\dd\frac{log_{2k}N}{n}=\dd\frac{log_2N}{n\cdot log_22k}=\dd\frac{log_2N}{n(1+log_2k)}$ 
be the information rate of $\codi$, and let $R'$ the information rate of $\Phi(\codi)$.
We can express $R'$ as
 \[
 R'=\dd\frac{log_2N}{kn}=\dd\frac{1+log_2k}{k}R
\]
therefore $R'$ is getting smaller than $R$ while the value of $k$ is raising; in fact, if $k\geq 3$ we obtain $R'<R$. 

\medskip

In this section we have seen that $\Z_{2k}$-codes can be represented as binary codes. We will use this representation in the next 
section to give some results about codes in $\Z_{2i_1}^{k_1}\times\cdots\times \Z_{2i_r}^{k_r}$, where $\times$ denotes the direct product,
and some necessary conditions to be 1-perfect codes.

%%%%%%%%%%%%%%%

%%%%
\section{Perfect propelinear codes}
%%%%

\begin{defi}
A general mixed group code $\codi$ is an additive subgroup of $G_1\times\cdots\times G_r$, where $G_1,\ldots,G_r$ are finite groups. 
We say that a binary code $\codi$ of length $n$ is a {\em mixed group code of type} $(\Z_{2i_1}^{k_1},\ldots,\Z_{2i_r}^{k_r})$ 
if $\codi=\Phi(C)$, where $i_1,\cdots,i_r$ are the minimum value such that $C$ is a subgroup of 
$\Z_{2i_1}^{k_1}\times\cdots\times\Z_{2i_r}^{k_r}$ and $\sum_{j=1}^r i_j k_j=n$. 
We denote $C\leq \Z_{2i_1}^{k_1}\times\cdots\times\Z_{2i_r}^{k_r}$.
\end{defi}

\begin{rema}
If $C \leq \Z_{2i_1}^{k_1}\times\cdots\times \Z_{2i_r}^{k_r}$ then $C=C_1 \times \cdots \times C_r$, with 
$C_j \leq \Z_{2i_j}^{k_j}$. We can write $\Phi(C)$ as $(\Phi_1(C_1),\cdots,\Phi_r(C_r))$ with  
$\Phi_j:\Z_{2i_j}^{k_j}\longrightarrow \Z_2^{k_j i_j}$ as in Definition \ref{def:ext-Phi}.
We will denote $x \in \codi$ as $(x_1|\cdots|x_r)$ where $x_j\in \Phi_j(C_j)$.
\end{rema}

\begin{theo} 
Let $\codi$ be a binary mixed group code of type $(\Z_{i_1}^{k_1},\ldots,\Z_{i_r}^{k_r})$ and length $n$. If $\codi$ is 1-perfect, then 
$\codi$ is of type $(\Z_2^k,\Z_4^{(n-k)/2})$ for some $k\in\N$. 
\end{theo}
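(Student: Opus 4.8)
The plan is to exploit the fact that, by the corollary to Theorem~\ref{theo:isocodes}, each block $\Phi_j(C_j)$ contributes to the Hamming weight of a codeword exactly the Lee weight of the corresponding coordinates in $C_j$, so I can translate the sphere-packing (perfectness) condition and, more importantly, the \emph{minimum distance} condition into statements about Lee weights in the components $\Z_{2i_j}$. First I would recall that a $1$-perfect binary code of length $n$ has exactly $2^n/(n+1)$ codewords and minimum Hamming distance $3$. Since $\codi=\Phi(C)$ with $C=C_1\times\cdots\times C_r$, the weight of a codeword $(x_1|\cdots|x_r)$ is $\sum_j \pes(x_j)=\sum_j \pesLee(c_j)$ where $x_j=\Phi_j(c_j)$; in particular the all-but-one-block-zero codewords show that the minimum Lee weight of each nonzero $C_j$ (if $C_j\neq\{0\}$) is at least $3$, unless $C_j$ has a coordinate vector whose image under $\phi$ has Hamming weight $1$ or $2$.

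The key observation is about what happens inside a single block $\Z_{2i}$ when $i\ge 3$: the element $1\in\Z_{2i}$ has $\pesLee(1)=1$, so the unit vectors $e_t$ (the images $\phi(1)$ placed in one position, zero elsewhere) would have to avoid $C$ for the code to have minimum distance $3$; but then I would look at the element $2\in\Z_{2i}$, which has $\pesLee(2)=2$, and argue that in any subgroup $C_j\le\Z_{2i}^{k_j}$ with $i\ge3$ one is forced either to include a weight-$\le 2$ codeword (violating $d=3$) or to make the subgroup so small that the sphere-packing count $2^n/(n+1)$ cannot be met. Concretely, I expect the argument to run: if some $i_j\ge 3$, consider the projection of $\codi$ onto that block; a $1$-perfect code must correct every single error, and a single binary error inside a $\Z_{2i}$-block corresponds (via $\phi^{-1}$, which is distance-preserving by the remark after~(\ref{phi})) to moving by Lee distance $1$ inside $\Z_{2i}$, i.e. adding $\pm1$ to one coordinate; combined with the structure of $\phi$ (consecutive $\phi(i)$ differ in one coordinate, and the ``wrap'' $\phi(i-1)\to\phi(i)$ visits all $k$ coordinates cyclically via $\sigma$), one shows the covering radius forces a contradiction with $i_j\ge 3$. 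Thus every $i_j\in\{1,2\}$, which is exactly the statement, with $k=\sum_{j:i_j=1}k_j$ and $(n-k)/2=\sum_{j:i_j=2}k_j$.

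The main obstacle, and the step I would spend the most care on, is ruling out $i_j\ge 3$ cleanly: one must show not merely that a \emph{linear} $\Z_{2i}$-code with $i\ge3$ can't be $1$-perfect in isolation, but that it cannot appear as a \emph{block} of a larger $1$-perfect binary code. I would handle this by a local/parity argument: fix attention on one $\Z_{6}$-block (the case $i=3$ being representative, larger $i$ being similar). The images $\phi(0),\dots,\phi(5)\in\Z_2^3$ are $000,001,011,111,110,100$, and the key arithmetic fact is that for $i\ge3$ there exist two elements $a\neq b\in\Z_{2i}$ with $\pesLee(a)=\pesLee(b)=1$ and $\pesLee(a-b)=2$ (namely $a=1,b=-1$); hence if $C_j$ contained any element having such an $a$ in one coordinate, sliding that coordinate from $a$ to $b$ is two binary errors, but both $\phi(a)$ and $\phi(b)$ lie at Hamming distance $1$ from $\phi(0)$, so they are both "adjacent" to the zero pattern, contradicting unique decoding of the single errors $\phi(0)\to\phi(a)$ and $\phi(0)\to\phi(b)$ once one checks these two weight-one binary words are \emph{distinct} (they are $001$ and $100$ in the $\Z_6$ model) — forcing $C_j$ to avoid every coset representative touching such coordinates, which collapses $|C_j|$ below the sphere-packing bound. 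I would then finish by the counting identity: the surviving possibilities are exactly $i_j\le 2$, and reorganizing the blocks gives type $(\Z_2^{k},\Z_4^{(n-k)/2})$.
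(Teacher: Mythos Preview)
Your plan has a genuine gap at the step you yourself flag as ``the main obstacle.'' The ``key arithmetic fact'' you isolate --- that for $i\ge 3$ there exist $a\neq b\in\Z_{2i}$ with $\pesLee(a)=\pesLee(b)=1$ and $\pesLee(a-b)=2$, namely $a=1$, $b=-1$ --- holds equally well for $i=2$: in $\Z_4$ one has $\pesLee(1)=\pesLee(3)=1$ and $\pesLee(1-3)=\pesLee(2)=2$. So this fact cannot be what separates $i\ge 3$ from $i=2$, and any argument built on it alone would, if valid, also exclude $\Z_4$-blocks, which is false. Moreover, the ``contradiction with unique decoding'' you describe is not a contradiction: the two received words obtained from the zero codeword by the single errors $\phi(0)\to\phi(1)$ and $\phi(0)\to\phi(-1)$ are \emph{different} weight-one vectors, and a $1$-perfect code simply decodes each back to $\zero$. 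Nothing forces $C_j$ to contain a Lee-weight-$\le 2$ element, and the concluding ``collapses $|C_j|$ below the sphere-packing bound'' is asserted rather than argued.

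The paper's proof exploits the feature that actually distinguishes $i\ge 3$: for such $i$ the image $\phi(\Z_{2i})$ is a \emph{proper} subset of $\Z_2^i$, whereas $\phi(\Z_4)=\Z_2^2$. Concretely, the paper takes the weight-two vector $x=(1\,0\cdots 0\,1\mid 0\cdots 0)$ whose first block $1\,0\cdots 0\,1$ lies outside $\phi(\Z_{2i_1})$ whenever $i_1\ge 3$. Any codeword at distance $\le 1$ from $x$ must have weight $\ge 3$; checking the possible single bit-flips against the explicit list of images forces $i_1=3$ and the nearby codeword to be $(111\mid 0\cdots 0)=\Phi(3,0,\ldots,0)$. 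A second pair of explicit test vectors, $u=(101\,100\cdots 0)$ and $v=(101\,010\cdots 0)$, then yields two codewords at Hamming distance $2$, contradicting $d\ge 3$ and eliminating $i_1=3$ as well. To repair your approach you would need to replace the $a=1$, $b=-1$ observation by something that genuinely fails for $\Z_4$; the non-surjectivity of $\phi$ for $i\ge 3$ is the natural candidate.
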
 

\begin{proof}
Let $\codi$ be a binary mixed group code of type $(\Z_{2i_1}^{k_1},\ldots,\Z_{2i_r}^{k_r})$. Suppose there exists $j \in \{1,\cdots,r\}$
such that $i_j>2$. Without loss of generality we will assume $j=1$ and $k_j=1$.

Let $x=(10\cdots 01|0\cdots0|\cdots |0\cdots0)\in \F^n$. If $\codi$ is 1-perfect, then there exists $y\in C$ such that
$d(x,\Phi(y))\leq 1$. As the minimum weight in $\codi$ is $3$ and the distance of $x$ must be at most $1$, the only possibility is 
$i_1=3$ and $\Phi(y)=(111|0\cdots0|\cdots |0\cdots0)$, therefore $C=G_1\times \cdots \times G_r$ where $G_1$ is a subgroup of $\Z_6$
and $3\in G_1$. The only subgroups of $\Z_6$ that contain $3$ are $\{0,3\}$ and $\Z_6$. We assume $G_1=\Z_6$; 
otherwise, $G_1=\{0,3\}$ would be isomorphic to $\Z_2$.
%Vector $(001|0\cdots0)$ is not a codeword because its weight is $1$ so, there exists two nonzero coordinates in $\{4,\cdots,n\}$.
%Without loss of generality we suppose these coordinates are exactly $4$ and $5$.
Let $u=(101100 \cdots 0)$, $v=(101010\cdots0)\in \F^n$ (where customary commas have been deleted); $u,v \not\in \codi$.
 The only codewords at distance $1$ of $u$ and $v$ are, respectively, $(111100\cdots 0)$ and $(111010\cdots 0)$ but the distance 
beetwen them is $2$ which is not possible if $\codi$ is 1-perfect.
\end{proof}

\medskip

1-perfect binary mixed codes of type $(\Z_2^k,\Z_4^{(n-k)/2})$ are called 1-perfect additive codes and they are studied in \cite{Adds}.

%%%%
\section{Conclusions}
%%%%

It is well known the usual Gray map from $\Z_4$ to $\Z_2^2$ (see \cite{Carlet}, \cite{Sole} and \cite{Wan}) but there are different ways of 
giving a generalization from $\Z_r$ to $\Z_2^m$. The generalization given in this paper has the property to be distance-preserving, 
considering the Lee distance in $\Z_r$ and the Hamming distance in $\Z_2^m$. However there could be other kind of generalizations, perhaps 
the most important to be considered are those where the distance in $\Z_r$ is different to the Lee distance or, merely, where the distance 
beetwen $0$ and $r-1$ is not $1$.

Let $\phi:\Z_r\longrightarrow \Z_2^m$ be the Gray map, and let $(\phi(\Z_r),\cdot)$ (defined in (\ref{eq:product1})) be a Hamming-compatible
code. We know that $r$ is even ($r=2k$) and, without useless coordinates, $m$ is exactly $k$. We have proved that such a Gray map is, 
in fact, unique up to coordinate permutation and we have used this to give some results on $\Z_{2k}$-codes.

Given a $\Z_{2k}$-code of length $n$, there exists a binary propelinear code of length $kn$ such that both codes are isomorphic. In this way
codes in $\Z_{2i_1}^{k_1}\times\cdots\times \Z_{2i_r}^{k_r}$ (or mixed groups of type ($\Z_{2i_1}^{k_1},\cdots ,\Z_{2i_r}^{k_r}$))
could be represented as binary codes. Finally we have seen that if such a code is 1-perfect then, necessarily, it is a code of type 
$(\Z_2^{k_1},\Z_4^{k_2})$. 

As we have seen at the end of the Section $3$, the representation of a $\Z_{2k}$-code as a binary code is not efficient enough because 
the information rate wich is $R$ in the first code, become $\dd\frac{1+log_2k}{k}R$ in the second one, that is lower.
From this point of view, as we have seen that the representation of a $\Z_{2k}$-code is unique, we should look for other alternatives, 
apart from Gray maps, to represent a $\Z_{2k}$-code as a binary code.

\end{document}